 \newtheorem{remark}{Remark} 
\newtheorem{theorem}{Theorem}[section]
\newtheorem{definition}{Definition}[section]
\newtheorem{lemma}{Lemma}[section]
\newtheorem{coro}{Corollary}[section]
\begin{document}

\begin{frontmatter}



\title{Holographic Algorithms on Bases of Rank 2}


\author[label1]{Zhiguo Fu
\footnote{Supported by Youth Foundation of Jilin University 450060445374. E-mail:fucomplex@hotmail.com}}
\address[label1]{Mathematics School of Jilin University Changchun 130024,  P.R. China}

\author[label2]{Fengqin Yang}
\address[label2]{School of Computer Science and Information Technology of Northeast Normal University Changchun 130117,  P.R. China}

\begin{abstract}
An essential problem in the design of holographic
algorithms
is to decide whether the required signatures
can be realized by matchgates under a suitable basis transformation (SRP).
For holographic algorithms on domain size 2,  \cite{string1, string3,  string7, string9} have built a systematical theory.
In this paper, we reduce SRP on domain size $k\geq 3$ to SRP on domain size 2 for holographic algorithms on bases of rank 2.
Furthermore, we generalize the collapse theorem of \cite{string6} to domain size $k\geq 3$.
\end{abstract}
\begin{keyword}
Holographic Algorithms; Matchgate Computation;
SRP
\end{keyword}

\end{frontmatter}

\section{Introduction}
\label{Introduction}
L.  Valiant \cite{string23} introduced holographic algorithms with matchgates.  Computation in these algorithms is expressed and interpreted through a choice of linear basis vectors in an exponential ``holographic" mix.
Then the actual computation is carried out,
via the Holant Theorem,  by the Fisher-Kasteleyn-Temperley
algorithm for counting the number of perfect matchings
in a planar graph.
This methodology has produced polynomial time algorithms for a variety  of problems.  No polynomial time algorithms were known for any of these problems,  and some minor variations are known to be NP-hard.

For example,  Valiant showed that the restrictive SAT problem $\sharp_{7}$Pl-Rtw-Mon-3CNF (counting the number of satisfying assignments of a planar read-twice monotone 3CNF formula,  modulo 7) is solvable in P \cite{string24}.  The same counting problem $\sharp$Pl-Rtw-Mon-3CNF without mod 7 is known to be $\sharp$P-complete and the problem mod 2 is $\oplus$P-complete.  The surprising tractability mod 7 is due to the unexpected existence of some basis transformations for matchgate signatures.

For a general CSP-type counting problem,  one can assume there is a natural
parameter $k$,  called its domain size.
This is the range over which variables take values.
  For example,  Boolean CSP problems
all have domain size 2.  A $k$-coloring problem on graphs has domain size $k$.
In holographic algorithms one considers a linear transformation,
which can be expressed as a $2^{\ell}\times k$ matrix
$M = (\alpha_{1}, ~\alpha_{2},  \cdots,  \alpha_{k})$.
This is called a basis of $k$ components,
and $\ell$ is called the size of the basis\footnote{Following \cite{string23},
to allow greater flexibility in the design of holographic algorithms,
 a basis here may not be linearly independent,
e.g.,  when $\ell=1$, $k=3$.
However to be applicable to matchgates,  the number of rows must be a
power of 2.}.
A holographic algorithm is said to be on domain size $k$ if
the respective signatures are  realized by matchgates
using a basis of $k$ components.
When designing a holographic algorithm for any particular problem,  an
 essential step is to decide whether there is a linear basis for which certain signatures of both generators and recognizers can be simultaneously realized.
This is called Simultaneous Realizability Problem (SRP).

For SRP on domain size 2,
a systematic theory has been built in \cite{string1, string3,  string7, string9}. 
Recently, Valiant gave polynomial time algorithms for some interesting problems
by  holographic algorithms on $2\times 3$ bases in \cite{string25}, i.e. the domain size is 3. To understand the power of holographic algorithms, we need to
consider signatures on  domain size $k\geq 3$.
In the present paper, we give a method to reduce SRP on domain size $k\geq 3$ to SRP on domain size 2 if the signatures are realized on a basis of rank 2.

Obviously, utilizing bases of a higher size is always a theoretic possibility which may allow us to devise more holographic algorithms. But Cai and Lu proved a surprising result for holographic algorithms on domain size 2 in \cite{string6}: Any holographic algorithms on domain size 2 and a basis of size $\ell\geq 2$ which employs at least one non-degenerate generator can be simulated on a basis of size 1.
This is the collapse theorem for holographic algorithms on domain size 2.
In this paper, we give a collapse theorem for holographic algorithms on a $2^{\ell}\times k$ basis $M$, where $M$ has rank 2.

The above results are proved by ruling out a trivial case,  which happens when all the recognizers or generators are degenerate.
Holographic algorithms which only use degenerate recognizers or generators are trivial\cite{string6}.




\section{Background}
\label{Backgroun}
In this section,  we review some definitions and results.  More details can be found in \cite{string1, string3, string7, string23, string24}.

Let $G=(V, E, \omega)$
be a weighted undirected planar graph,
where $\omega$ assigns edge weights.
A generator (resp.  recognizer) matchgate $\Gamma$ is a tuple $(G, X)$
where $X\subseteq V$ is a set of external output (resp.  input) nodes.
The external nodes are ordered counter-clock wise on the external face.

Each matchgate is assigned a signature tensor.
 A generator $\Gamma$ with $n$ output nodes is assigned
a contravariant tensor $\textbf{G}$ of type $(\displaystyle _{0}^{n})$.
Under the standard basis $[e_{0}~e_{1}]
=
\begin{pmatrix}
1 & 0\\
0 & 1
\end{pmatrix}$,
 it takes the form $\underline{G}$
with $2^{n}$ entries,  where
\[
\underline{G}^{i_{1}i_{2}\cdots i_{n}}= {\rm PerfMatch}(G-Z),
i_{1}, i_{2}, \cdots, i_{n} \in \{0,  1\}.
\]
Here $Z$ is the subset of the output nodes having the characteristic
sequence $\chi_{Z}=i_{1}i_{2}\cdots i_{n}$,
$G-Z$ is the graph obtained from $G$ by removing $Z$ and its adjacent edges.
${\rm PerfMatch}(G-Z)$ is the sum,  over all perfect matchings $M$ of
$G-Z$,  of the product of the weights of matching edges in $M$.
(If all  weights are 1,  this is the number of perfect matchings.)
$\underline{G}$ is called the standard signature of the generator $\Gamma$.
We can view $\underline{G}$ as a column vector (whose entries are
ordered lexicographically according to $\chi_{Z}$).

Similarly a recognizer $\Gamma'=(G', X')$ with $n$ input nodes is assigned a covariant tensor $\textbf{R}$ of type $(\displaystyle _{n}^{0})$.
Under the standard basis,  it takes the form $\underline{R}$ with $2^{n}$ entries,
\[
\underline{R}_{i_{1}i_{2}\cdots i_{n}}= {\rm PerfMatch}(G'-Z),i_{1}, i_{2}, \cdots, i_{n} \in \{0,  1\},\]
where $Z$ is the subset of the input nodes having the characteristic sequence $\chi_{Z}=i_{1}i_{2}\cdots i_{n}$.
$\underline{R}$ is called the standard signature of the recognizer $\Gamma'$.
We can view $\underline{R}$ as a row vector (whose entries are
ordered lexicographically according to $\chi_{Z}$).

Generators and recognizers are essentially the same as far as
their standard signatures are concerned.
The distinction is how they transform with respect to a basis
transformation over some field (the default is $\bf C$).

 A $basis~M$ on domain size $k$ is a $2^{\ell}\times k$ matrix $(\alpha_{1}  , \alpha_{2} , \cdots , \alpha_{k})$ , where $\alpha_{i}$ has dimension $2^{\ell}$ (size $\ell$).
Under a basis $M$,  we can talk about the signature of a matchgate
after the transformation.


\begin{definition}
The contravariant tensor $\textbf{G}$ of a generator $\Gamma$ has signature $G$
(written as a column vector)
 under basis $M$ iff $M^{\otimes n}G=\underline{G}$ is the standard signature of the generator $\Gamma$.
\end{definition}


\begin{definition}
The covariant tensor $\textbf{R}$ of a recognizer $\Gamma'$ has signature $R$
(written as a row vector)
 under basis $M$ iff $\underline{R}M^{\otimes n}=R$ where $\underline{R}$ is the standard signature of the recognizer $\Gamma'$.
\end{definition}


\begin{definition}
A contravariant tensor $\textbf{G}$
(resp.  a covariant tensor $\textbf{R}$)
is realizable over a basis $M$ iff there exists a generator $\Gamma$ (resp.  a recognizer $\Gamma'$) such that $G$ (resp.  $R$) is the signature of $\Gamma$ (resp.  $\Gamma'$) under basis $M$.
They are simultaneously realizable if they are realizable over a
common basis.
\end{definition}





A matchgrid $\Omega=(A, B, C)$ is a weighted planar graph consisting of a disjoint union of: a set of $g$ generators $A=(A_{1}, A_{2}, \cdots, A_{g})$,  a set of $r$ recognizers $B=(B_{1}, B_{2}, \cdots, B_{r})$,  and a set of $f$ connecting edges $C=(C_{1}, C_{2}, \cdots, C_{f})$,  where each $C_{i}$ edge has weight 1 and joins an output node of a generator with an input node of a recognizer,  so that every input and output node in every constituent matchgate has exactly one such incident connecting edge.


Let $G(A_{i}, M)$ be the signature of generator $A_{i}$ under the basis $M$ and $R(B_{j}, M)$ be the signature of recognizer $B_{j}$ under the basis $M$.
Let $G=\bigotimes_{i=1}^{g}G(A_{i}, M)$ and $R=\bigotimes_{j=1}^{r}R(B_{j}, M)$
be their tensor product,
then $\rm{Holant}(\Omega)$ is defined to be the {\it contraction}
of these two product tensors
(the sum over all indices of the product of the
corresponding values of $G$ and $R$),
where the corresponding indices match up according to the $f$ connecting edges in $C$.

Valiant's {Holant} Theorem is

\begin{theorem}(Valiant \cite{string23})
For any mathcgrid $\Omega$ over any basis $M$,  let $\Gamma$ be its underlying weighted graph,
 then
\begin{center}
${\rm{Holant}}(\Omega)={\rm{PerfMatch}}(\Gamma)$.
\end{center}
\end{theorem}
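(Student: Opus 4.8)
The plan is to prove the Holant Theorem in two independent stages: a \emph{holographic} stage, showing that evaluating the Holant under the basis $M$ yields the same scalar as evaluating it under the standard basis, and a \emph{combinatorial} stage, showing that the standard-basis Holant equals ${\rm PerfMatch}(\Gamma)$ exactly. Separating the roles of the basis transformation and the perfect-matching count is what makes the argument manageable.

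For the first stage I would write $G_{\rm all}=\bigotimes_{i=1}^{g}G(A_{i},M)$ and $R_{\rm all}=\bigotimes_{j=1}^{r}R(B_{j},M)$, and let $t$ be the common arity paired up by the connecting edges $C$. Using the defining relations of a generator and a recognizer under basis $M$, together with the mixed-product property of tensor products, one gets $M^{\otimes t}G_{\rm all}=\underline{G}_{\rm all}$ and $\underline{R}_{\rm all}M^{\otimes t}=R_{\rm all}$, where $\underline{G}_{\rm all},\underline{R}_{\rm all}$ are the tensor products of the standard signatures. Once the index orderings on the two sides are arranged to agree with the pairing induced by $C$, the contraction defining ${\rm Holant}(\Omega)$ becomes the plain dot product $R_{\rm all}\cdot G_{\rm all}$, whence
\[
R_{\rm all}\cdot G_{\rm all}=\underline{R}_{\rm all}\,M^{\otimes t}G_{\rm all}=\underline{R}_{\rm all}\cdot\underline{G}_{\rm all},
\]
so the basis $M$ cancels. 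Crucially this works even though $M$ is a non-square $2^{\ell}\times k$ matrix, since no inverse of $M$ is ever required: the transformations sit on opposite sides of the contraction and annihilate directly.

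For the second stage it remains to identify $\underline{R}_{\rm all}\cdot\underline{G}_{\rm all}$ with ${\rm PerfMatch}(\Gamma)$. This contraction sums, over all assignments of $\{0,1\}$-indices to the standard-level connecting edges, of the product of the corresponding standard-signature entries, each of which is itself a ${\rm PerfMatch}$ of a gadget with a prescribed subset $Z$ of external nodes deleted. Using the convention that index $1$ on a node means the node is deleted and index $0$ means it is retained, I would set up the bijection: a perfect matching of $\Gamma$ decides, for each unit-weight connecting edge $(u,v)$, whether that edge is used; using it forces the deletion of both $u$ and $v$ (index $1$ on both endpoints), and not using it retains both (index $0$). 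Because the two endpoints of a connecting edge share one contraction index, only consistent assignments survive, and each surviving assignment pairs a perfect matching of every generator-remainder with one of every recognizer-remainder. Multiplying the gadget ${\rm PerfMatch}$ values and the unit weights of the used connecting edges reconstructs precisely the weight of one perfect matching of $\Gamma$, and summing reconstructs ${\rm PerfMatch}(\Gamma)$.

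The main obstacle will be the bookkeeping in the first stage: making precise that the contraction prescribed by $C$ is the honest dot product $R_{\rm all}\cdot G_{\rm all}$ with indices correctly aligned. This requires tracking the counter-clockwise ordering of the external nodes and the pairing of generator outputs with recognizer inputs so that the factor $M^{\otimes t}$ on the recognizer side acts on exactly the indices that meet the generator side; any residual discrepancy in ordering must be absorbed into a permutation of tensor factors that commutes with $M^{\otimes t}$. By contrast, the combinatorial stage is conceptually routine once the deletion convention is fixed, the only point of care being that the connecting edges carry weight $1$ and therefore contribute no extra factors.
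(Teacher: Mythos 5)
The paper does not prove this theorem; it is quoted as background and attributed to Valiant \cite{string23}, so there is no internal proof to compare against. Your two-stage argument --- first cancelling the basis via $R_{\rm all}\cdot G_{\rm all}=\underline{R}_{\rm all}M^{\otimes t}G_{\rm all}=\underline{R}_{\rm all}\cdot\underline{G}_{\rm all}$ (which indeed needs no inverse of the non-square $M$), then identifying the standard-basis contraction with ${\rm PerfMatch}(\Gamma)$ by the used/unused dichotomy on the weight-$1$ connecting edges --- is exactly the standard proof from the cited literature, and the deletion convention you adopt (index $1$ means the node lies in $Z$ and is removed) matches the paper's definition of the standard signature. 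The one bookkeeping point worth making explicit, beyond the index alignment you already flag, is that for a basis of size $\ell>1$ each logical connecting index corresponds to a block of $\ell$ physical external nodes and connecting edges; your stage-one algebra handles this automatically, but your stage-two bijection should be read at the level of the physical edges.
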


The FKT algorithm can compute the weighted sum of
 perfect matchings $\rm{PerfMatch}(\Gamma)$   for a planar graph in P.  So $\rm{Holant}(\Omega)$ is computable in P.

 In the following discussion, we denote $\{1, 2, \cdots, k\}$
as $[k]$.

\section{Degenerate Recognizers}
\begin{definition}
A signature $R=(R_{i_{1}i_{2}\cdots i_{n}})$ (generator or recognizer) on domain size $k$ is degenerate iff
$R=v_{1}\otimes v_{2}\otimes\cdots\otimes v_{n}$, where $v_{i}$ are vectors of dimension $k$. Otherwise is non-degenerate.
\end{definition}

If all of the recognizers are degenerate, then the holographic algorithm is trivial. This is discussed in \cite{string6}.

\begin{definition}
For a recognizer $R=(R_{i_{1} i_{2} \cdots i_{n}})$ on domain size 2, i.e., $i_{j}\in\{1, 2\}$ for $1\leq j\leq n$, the $2\times 2^{n-1}$ matrix
\begin{equation*}
A_{R}(t)=\begin{pmatrix}
R_{i_{1}\cdots i_{t-1}1 i_{t+1}\cdots i_{n}}\\
R_{i_{1}\cdots i_{t-1}2 i_{t+1}\cdots i_{n}}
\end{pmatrix}
\end{equation*}
is called the $t$-th signature matrix of $R$
for $1\leq t\leq n$.
\end{definition}

\begin{lemma}\label{degenerate}
The recognizer $R$ on domain size 2 is degenerate iff rank$(A_{R}(t))\leq 1$ for $1\leq t\leq n$.
\end{lemma}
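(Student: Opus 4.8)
The plan is to prove both implications, with the forward direction by direct computation and the reverse by induction on the arity $n$.

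For the forward direction, suppose $R$ is degenerate, say $R = v_{1} \otimes v_{2} \otimes \cdots \otimes v_{n}$ with each $v_{j} = (v_{j}^{1}, v_{j}^{2})$. Then $R_{i_{1} \cdots i_{n}} = \prod_{j=1}^{n} v_{j}^{i_{j}}$, so the entry of $A_{R}(t)$ in row $a \in \{1,2\}$ and in the column indexed by $(i_{1}, \ldots, i_{t-1}, i_{t+1}, \ldots, i_{n})$ equals $v_{t}^{a} \prod_{j \neq t} v_{j}^{i_{j}}$. Writing $w$ for the row vector whose entries are $\prod_{j \neq t} v_{j}^{i_{j}}$, both rows of $A_{R}(t)$ are scalar multiples of the same $w$ (namely $v_{t}^{1} w$ and $v_{t}^{2} w$), so $\mathrm{rank}(A_{R}(t)) \leq 1$ for every $t$. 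This step is routine.

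For the reverse direction I would argue by induction on $n$; the base case $n = 1$ is immediate, since a length-$2$ vector is trivially degenerate and $A_{R}(1)$ is a single column. For the inductive step, assume $\mathrm{rank}(A_{R}(t)) \leq 1$ for all $t$. The condition $\mathrm{rank}(A_{R}(n)) \leq 1$ says the two rows of the $2 \times 2^{n-1}$ matrix $A_{R}(n)$ are linearly dependent, so $A_{R}(n)$ factors as an outer product; concretely there is a $2$-vector $v_{n} = (v_{n}^{1}, v_{n}^{2})$ and an $(n-1)$-index signature $R'$ with $R_{i_{1} \cdots i_{n}} = R'_{i_{1} \cdots i_{n-1}}\, v_{n}^{i_{n}}$. (If $v_{n} = 0$ then $R = 0$ is trivially degenerate, so assume $v_{n} \neq 0$.) It then remains to show $R'$ is degenerate, for which I would verify the hypothesis $\mathrm{rank}(A_{R'}(t)) \leq 1$ for $1 \leq t \leq n-1$ and invoke the induction hypothesis; finally $R = R' \otimes v_{n}$ yields the degeneracy of $R$.

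The crux, and the step I expect to be the main obstacle, is relating $A_{R}(t)$ to $A_{R'}(t)$ for $t < n$. Substituting the factorization, the entry of $A_{R}(t)$ in row $a$ and column $(i_{1}, \ldots, \widehat{i_{t}}, \ldots, i_{n})$ is $R'_{i_{1} \cdots a \cdots i_{n-1}}\, v_{n}^{i_{n}}$, so grouping the columns of $A_{R}(t)$ according to the value of the last coordinate $i_{n}$ exhibits $A_{R}(t)$, up to a reordering of columns, as the horizontal concatenation $[\, v_{n}^{1} A_{R'}(t) \mid v_{n}^{2} A_{R'}(t) \,]$. Since $v_{n} \neq 0$, one block is a nonzero scalar multiple of the other, so the two blocks span the same column space and $\mathrm{rank}(A_{R}(t)) = \mathrm{rank}(A_{R'}(t))$. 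Hence $\mathrm{rank}(A_{R'}(t)) \leq 1$ for all $t \leq n-1$, which completes the induction. The only care needed here is the bookkeeping about which index labels the rows and which ones label the columns once the last coordinate is split off, and the observation that a permutation of columns does not change the rank.
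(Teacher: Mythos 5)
Your proof is correct and follows essentially the same strategy as the paper's: induction on the arity, using the rank condition on one signature matrix to factor off a single tensor component, and then observing that the remaining $(n-1)$-ary signature inherits the rank hypothesis because its signature matrices sit inside (scaled copies of columns of) those of $R$. The only differences are cosmetic — you peel off the last index rather than the first and start the induction at $n=1$ rather than $n=2$ — and your explicit block-decomposition bookkeeping for $A_R(t)$ versus $A_{R'}(t)$ is if anything more careful than the paper's one-line justification.
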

\begin{proof}
If $R$ is degenerate, it is obvious that rank$(A_{R}(t))\leq 1$ for $1\leq t\leq n$.

Conversely,
If there exists $t$ such that rank$(A_{R}(t))=0$, then $R$ is identical to zero and is degenerate.
Otherwise, rank$(A_{R}(t))=1$ for $1\leq t\leq n$.
 We will prove the Lemma by induction on the arity $n$.

For $n=2$, there exists a non-zero row $R_{i}$ in $A_{R}(1)$ and constants $a_{\sigma}$ such that $R_{\sigma}=a_{\sigma}R_{i}$ for $\sigma=1, 2$ since rank$(A_{R}(1))=1$.
Let $v_{1}=(a_{1}, a_{2}), v_{2}=R_{i}$, then
$R=v_{1}\otimes v_{2}$.

Inductively assume that the Theorem has  been proved for $\leq n-1$.
Since rank$(A_{R}(1))=1$, there exists a non-zero row $R_{i}$ in $A_{R}(1)$ and constants $a_{\sigma}$ such that $R_{\sigma}=a_{\sigma}R_{i}$ for $\sigma=1, 2$.
Note that $R_{i}$ is a signature of arity $n-1$ and all of its signature matrices are sub-matrices the signature matices of $A_{R}$.
By induction, $R_{i}$ is degenerate and there exist vectors
$v_{2}^{i}, v_{3}^{i},\cdots, v_{n}^{i}$ such that $R_{i}=v_{2}^{i}\otimes v_{3}^{i}\otimes\cdots\otimes v_{n}^{i}$.
Let $v_{1}=(a_{1}, a_{2})$ and $v_{t}=v_{t}^{i}$ for $2\leq t\leq n$, then $R=v_{1}\otimes v_{2}\otimes\cdots\otimes v_{n}$.
\end{proof}

In the present paper, assume that $M=(\alpha_{1}~~ \alpha_{2}~~ \cdots ~~\alpha_{k})$ is a $2^{\ell}\times k$ basis and of rank 2, where $k\geq 3$.
Then there exist $\sigma, \tau\in[k]$ such that the sub-matrix $(\alpha_{\sigma}~~\alpha_{\tau})$ of $M$ has rank 2.

\begin{definition}
For  $R=(R_{i_{1} i_{2} \cdots i_{n}})$ on domain size  $k$, i.e. $i_{t}\in[k]$,  the sub-signature
$R^{(s_{1}, s_{2}, \cdots, s_{d})}=(R_{j_{1} j_{2} \cdots j_{n}})$,
where $j_{1}, j_{2}, \cdots, j_{n}\in\{s_{1}, s_{2}, \cdots, s_{d}\}\subset[k]$,
is called the restriction of $R$ to  $\{s_{1}, s_{2}, \cdots, s_{d}\}$.
\end{definition}

\begin{lemma}\label{degenerate for the restriction}
If the recognizer $R$ on domain size $k$ is realizable on $M$ and non-degenerate, then $R^{(\sigma, \tau)}$ is non-degenerate.
\end{lemma}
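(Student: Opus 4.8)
The plan is to prove the contrapositive: assuming that the restriction $R^{(\sigma,\tau)}$ is degenerate, I would deduce that the full signature $R$ is degenerate, contradicting the hypothesis. The mechanism is to factor the basis through its rank-$2$ core $(\alpha_{\sigma}~~\alpha_{\tau})$ and then to transport degeneracy along the realizability relation on $M$.

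First I would record a factorization of the basis. Since $M$ has rank $2$ and the submatrix $N:=(\alpha_{\sigma}~~\alpha_{\tau})$ already has rank $2$, the columns $\alpha_{\sigma},\alpha_{\tau}$ span the whole column space of $M$. Hence every column satisfies $\alpha_{i}=c_{i}\alpha_{\sigma}+d_{i}\alpha_{\tau}$ for scalars $c_{i},d_{i}$, which gives $M=NP$ for a $2\times k$ matrix $P$ whose $i$-th column is $(c_{i},d_{i})^{T}$. Applying the Kronecker mixed-product rule $(AB)\otimes(CD)=(A\otimes C)(B\otimes D)$ repeatedly yields $M^{\otimes n}=N^{\otimes n}P^{\otimes n}$.

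Next I would connect $R$, its restriction, and this factorization. Because $R$ is realizable on $M$, there is a standard signature $\underline{R}$ with $\underline{R}M^{\otimes n}=R$, so the $(j_{1}\cdots j_{n})$-entry of $R$ is the contraction of $\underline{R}$ with the column $\alpha_{j_{1}}\otimes\cdots\otimes\alpha_{j_{n}}$ of $M^{\otimes n}$. Restricting each index to $\{\sigma,\tau\}$ and identifying the two columns of $N$ with $\sigma$ and $\tau$ shows precisely that $R^{(\sigma,\tau)}=\underline{R}N^{\otimes n}$. Combining this with the factorization gives the central identity $R=\underline{R}M^{\otimes n}=(\underline{R}N^{\otimes n})P^{\otimes n}=R^{(\sigma,\tau)}P^{\otimes n}$, which expresses the entire domain-$k$ signature through its domain-$2$ restriction.

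Finally I would push degeneracy across this identity. If $R^{(\sigma,\tau)}$ is degenerate, write $R^{(\sigma,\tau)}=w_{1}\otimes\cdots\otimes w_{n}$ with each $w_{t}$ of dimension $2$. Applying $(a\otimes b)(C\otimes D)=(aC)\otimes(bD)$ factor by factor gives $R=R^{(\sigma,\tau)}P^{\otimes n}=(w_{1}P)\otimes\cdots\otimes(w_{n}P)$, a tensor product of the $k$-dimensional vectors $v_{t}:=w_{t}P$; that is, $R$ is degenerate, contradicting the assumption. I expect the only delicate point to be the bookkeeping: keeping track that each domain-$k$ variable corresponds to exactly one tensor factor (one copy of $M$, $N$, or $P$) and checking that the index restriction to $\{\sigma,\tau\}$ lines up with the two columns of $N$ in the correct order. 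Once those identifications are fixed, the two mixed-product identities make the remaining algebra routine.
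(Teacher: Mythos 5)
Your proposal is correct and follows essentially the same route as the paper: both prove the contrapositive by writing $R=R^{(\sigma,\tau)}Q^{\otimes n}$ for a $2\times k$ matrix $Q$ expressing the columns of $M$ in the span of $\alpha_{\sigma},\alpha_{\tau}$, and then transporting a tensor factorization of $R^{(\sigma,\tau)}$ to one of $R$. The only cosmetic difference is that you obtain $Q$ directly from the factorization $M=(\alpha_{\sigma}~\alpha_{\tau})P$, whereas the paper produces it as $NM$ using a left inverse $N$ of $(\alpha_{\sigma}~\alpha_{\tau})$; these are the same matrix.
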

\begin{proof}
Assume that $R=\underline{R}M^{\otimes n}$, where $\underline{R}$ is a standard signature.
Note that $R^{(\sigma, \tau)}=\underline{R}(\alpha_{\sigma}, \alpha_{\tau})^{\otimes n}$. Since $(\alpha_{\sigma}, \alpha_{\tau})$ has rank 2,
there exists a $2\times 2^{\ell}$ matrix $N$ such that $N(\alpha_{\sigma}, \alpha_{\tau})=I_{2}$. Thus
\begin{equation*}
R^{(\sigma, \tau)}N^{\otimes n}M^{\otimes n}=\underline{R}M^{\otimes n}=R.
\end{equation*}
If $R^{(\sigma, \tau)}$ is degenerate, i.e. $R^{(\sigma, \tau)}=v_{1}\otimes v_{2}\otimes\cdots\otimes v_{n}$, then
\begin{equation*}
R=v'_{1}\otimes v'_{2}\otimes\cdots\otimes v'_{n},
\end{equation*}
where $v'_{j}=v_{j}NM$ for $1\leq j\leq n$. This contradicts that $R$ is non-degenerate.
\end{proof}


\section{Holographic Algorithms on Bases of Rank 2}
For a recognizer $R=(R_{i_{1} i_{2} \cdots i_{n}})$ on domain size $k$,
the $nk^{(n-1)}\times 2$ matrix 
\begin{equation*}
 A_{\sigma,\tau}=\left(\begin{array}{ccccccccccccccccc}
R_{\sigma 1 \cdots 1 1}&R_{\sigma 1 \cdots 1 2}&\cdots &R_{\sigma k\cdots k k}
&R_{1 \sigma \cdots 1 1}&R_{1 \sigma \cdots 1 2}&\cdots &R_{k\sigma \cdots k k}&\cdots&
R_{1 1 \cdots 1 \sigma}&R_{1 1 \cdots 2 \sigma}&\cdots &R_{k k\cdots k \sigma}\\
R_{\tau 1 \cdots 1 1}&R_{\tau 1 \cdots 1 2}&\cdots &R_{\tau k\cdots k k}
&R_{1 \tau \cdots 1 1}&R_{1 \tau \cdots 1 2}&\cdots &R_{k\tau \cdots k k}&\cdots&
R_{1 1 \cdots 1 \tau}&R_{1 1 \cdots 2 \tau}&\cdots &R_{k k\cdots k \tau}
\end{array}\right)^{\texttt{t}}.
\end{equation*}
and the vector 
\begin{equation*}
b_{w}=(R_{w 1 \cdots 1 1}, R_{w 1 \cdots 1 2}, \cdots, R_{w k\cdots k k},
 R_{1 w \cdots 1 1}, R_{1 w \cdots 1 2}, \cdots, R_{k w\cdots k k},
\cdots,
R_{1 1 \cdots 1 w}, R_{1  1\cdots 2 w}, \cdots, R_{k k\cdots k w})^{\texttt{t}}
\end{equation*}
 of dimension $nk^{(n-1)}$
for $1\leq w\leq k$
(where the superscript $\texttt{t}$ denotes transpose). Then we have linear equations $A_{\sigma, \tau}X=b_{w}$  for $1\leq w\leq k$ from $R$.

\begin{lemma}\label{lemma4.1}
If $R$ is non-degenerate and realizable on $M$, then $A_{\sigma, \tau}$ has rank 2.
\end{lemma}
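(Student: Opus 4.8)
The plan is to use the fact that $A_{\sigma,\tau}$ has only two columns, so $\mathrm{rank}(A_{\sigma,\tau})\le 2$ holds automatically, and the entire task reduces to producing two linearly independent rows in the transpose $A_{\sigma,\tau}^{\texttt{t}}$. I read $A_{\sigma,\tau}^{\texttt{t}}$ as a $2\times nk^{(n-1)}$ matrix whose columns are indexed by a distinguished position $p\in[n]$ together with an assignment of the remaining $n-1$ coordinates from $[k]$: in such a column the top entry is the value of $R$ with $\sigma$ placed in position $p$ and the bottom entry is the value with $\tau$ placed there, the other coordinates being the given assignment. Thus showing $\mathrm{rank}(A_{\sigma,\tau})=2$ amounts to showing these two rows are linearly independent, which I will do by restricting attention to a favorable block of columns.

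First I pass to the restriction $R^{(\sigma,\tau)}$, a recognizer on domain size $2$ with domain values $\sigma,\tau$. Since $R$ is non-degenerate and realizable on $M$, Lemma \ref{degenerate for the restriction} gives that $R^{(\sigma,\tau)}$ is non-degenerate. Applying Lemma \ref{degenerate} to $R^{(\sigma,\tau)}$, non-degeneracy yields some position $t$ for which the $t$-th signature matrix satisfies $\mathrm{rank}(A_{R^{(\sigma,\tau)}}(t))=2$.

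The key step is to identify $A_{R^{(\sigma,\tau)}}(t)$ with a submatrix of $A_{\sigma,\tau}^{\texttt{t}}$. Selecting exactly those columns of $A_{\sigma,\tau}^{\texttt{t}}$ whose distinguished position is $p=t$ and whose remaining $n-1$ coordinates are restricted to the two-element set $\{\sigma,\tau\}$, the resulting $2\times 2^{n-1}$ block is precisely $A_{R^{(\sigma,\tau)}}(t)$ (its top row carries $\sigma$ in position $t$ and its bottom row carries $\tau$). Because this block already has rank $2$, the two rows of $A_{\sigma,\tau}^{\texttt{t}}$ are linearly independent, so $\mathrm{rank}(A_{\sigma,\tau}^{\texttt{t}})\ge 2$; having only two rows it has rank exactly $2$, whence $\mathrm{rank}(A_{\sigma,\tau})=2$. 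The only delicate point is this index bookkeeping---checking that collapsing the free coordinates of $A_{\sigma,\tau}$ onto $\{\sigma,\tau\}$ and fixing the distinguished position reproduces precisely the signature matrices of the domain-size-$2$ restriction; once that identification is made, the conclusion is immediate from the two earlier lemmas.
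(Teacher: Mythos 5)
Your proof is correct and follows essentially the same route as the paper's: both invoke Lemma \ref{degenerate for the restriction} to get non-degeneracy of $R^{(\sigma,\tau)}$, then use Lemma \ref{degenerate} together with the observation that the signature matrices of $R^{(\sigma,\tau)}$ sit inside $A_{\sigma,\tau}$ as submatrices. Your version merely spells out the index bookkeeping that the paper leaves implicit.
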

\begin{proof}
From Lemma \ref{degenerate for the restriction}, $R^{(\sigma, \tau)}$ is non-degenerate. Note that all of
the signature matrices of $R^{(\sigma, \tau)}$ are sub-matrices of $A_{\sigma, \tau}$, so $A_{\sigma, \tau}$ has rank 2 by Lemma \ref{degenerate}.
\end{proof}

By linear algebra, we have the following Lemma.

\begin{lemma}\label{lemma 4.2}
For the basis $M=(\alpha_{1}, \alpha_{2}, \cdots, \alpha_{k})$ of rank 2, where the sub-matrix $(\alpha_{\sigma}, \alpha_{\tau})$
has rank 2, there uniquely exists a $2\times k$ matrix
\[X_{\sigma, \tau}=\begin{pmatrix}
x^{\sigma}_{1}&x^{\sigma}_{2}&\cdots&x^{\sigma}_{k}\\
x^{\tau}_{1}&x^{\tau}_{2}&\cdots&x^{\tau}_{k}
\end{pmatrix}\]
such that $(\alpha_{\sigma}, \alpha_{\tau})X_{\sigma, \tau}=M$, where $X_{\sigma, \tau}$ has rank 2.
\end{lemma}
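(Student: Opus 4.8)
The plan is to treat this as an elementary statement about the column space of $M$. First I would observe that, since $M$ has rank $2$, its column space $\mathrm{col}(M)\subseteq\mathbf{C}^{2^{\ell}}$ is a $2$-dimensional subspace containing every column $\alpha_{i}$. Because the sub-matrix $(\alpha_{\sigma},\alpha_{\tau})$ has rank $2$, the two vectors $\alpha_{\sigma},\alpha_{\tau}$ are linearly independent; as they both lie in the $2$-dimensional space $\mathrm{col}(M)$, they must form a basis of it. Consequently each column admits a representation $\alpha_{i}=x^{\sigma}_{i}\alpha_{\sigma}+x^{\tau}_{i}\alpha_{\tau}$ for uniquely determined scalars $x^{\sigma}_{i},x^{\tau}_{i}$, and collecting these coefficients as the columns of a $2\times k$ matrix yields a matrix $X_{\sigma,\tau}$ of the stated shape with $(\alpha_{\sigma},\alpha_{\tau})X_{\sigma,\tau}=M$. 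This establishes existence.

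For uniqueness I would use that $(\alpha_{\sigma},\alpha_{\tau})$ has full column rank, hence a left inverse: there is a $2\times 2^{\ell}$ matrix $N$ with $N(\alpha_{\sigma},\alpha_{\tau})=I_{2}$, the same device employed in the proof of Lemma~\ref{degenerate for the restriction}. If $X$ is any $2\times k$ matrix with $(\alpha_{\sigma},\alpha_{\tau})X=M$, then multiplying on the left by $N$ forces $X=NM$, so the solution is unique (and in fact given explicitly by $X_{\sigma,\tau}=NM$).

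For the rank-$2$ claim I would invoke submultiplicativity of rank. From $(\alpha_{\sigma},\alpha_{\tau})X_{\sigma,\tau}=M$ we obtain $2=\mathrm{rank}(M)\leq\min\{\mathrm{rank}(\alpha_{\sigma},\alpha_{\tau}),\mathrm{rank}(X_{\sigma,\tau})\}\leq\mathrm{rank}(X_{\sigma,\tau})$, while $X_{\sigma,\tau}$ is a $2\times k$ matrix and so has rank at most $2$; hence $\mathrm{rank}(X_{\sigma,\tau})=2$. I expect no genuine obstacle here, since the statement is pure linear algebra; the only point requiring care is the claim that $\alpha_{\sigma},\alpha_{\tau}$ span $\mathrm{col}(M)$, which is precisely where the rank-$2$ hypotheses on both $M$ and the sub-matrix $(\alpha_{\sigma},\alpha_{\tau})$ are simultaneously used.
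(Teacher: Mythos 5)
Your proof is correct. The paper itself gives no argument for this lemma (it is introduced only with the phrase ``By linear algebra, we have the following Lemma''), and your existence/uniqueness/rank argument --- $\alpha_{\sigma},\alpha_{\tau}$ span $\mathrm{col}(M)$, a left inverse $N$ forces $X_{\sigma,\tau}=NM$, and $2=\mathrm{rank}(M)\leq\mathrm{rank}(X_{\sigma,\tau})\leq 2$ --- is exactly the routine verification the authors intended to leave to the reader.
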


Note that $\alpha_{w}=x_{w}^{\sigma}\alpha_{\sigma}+x^{\tau}_{w}\alpha_{\tau}$ for $1\leq w\leq k$.

\begin{lemma}
\label{core}
If $R$ is non-degenerate and realizable on $M$, i.e., $R=\underline{R}M^{\otimes n}$, where $\underline{R}$ is a standard signature, then
$A_{\sigma, \tau}X=b_{w}$ has the unique solution $X_{w}=\begin{pmatrix}x^{\sigma}_{w}\\x^{\tau}_{w}\end{pmatrix}$ for $1\leq w\leq k$.
\end{lemma}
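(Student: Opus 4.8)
The plan is to establish existence and uniqueness separately, using the tensor structure of realizability for the former and Lemma \ref{lemma4.1} for the latter.

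First, for existence I would expand, for any index tuple, the identity $R_{i_{1}\cdots i_{n}}=\underline{R}(\alpha_{i_{1}}\otimes\cdots\otimes\alpha_{i_{n}})$, which is just reading off the column of $M^{\otimes n}$ indexed by $(i_{1},\ldots,i_{n})$. Fixing a position $p$ and the remaining indices $j_{1},\ldots,j_{p-1},j_{p+1},\ldots,j_{n}$, I would substitute the relation $\alpha_{w}=x^{\sigma}_{w}\alpha_{\sigma}+x^{\tau}_{w}\alpha_{\tau}$ from Lemma \ref{lemma 4.2} into the $p$-th tensor slot. By multilinearity of the tensor product and linearity of $\underline{R}$, this yields
\[
R_{j_{1}\cdots w\cdots j_{n}}=x^{\sigma}_{w}\,R_{j_{1}\cdots \sigma\cdots j_{n}}+x^{\tau}_{w}\,R_{j_{1}\cdots \tau\cdots j_{n}},
\]
where $w,\sigma,\tau$ each occupy position $p$. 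As $p$ ranges over $\{1,\ldots,n\}$ and the other indices range over $[k]^{n-1}$, these are exactly the coordinate equations encoded by the rows of $A_{\sigma,\tau}$ and $b_{w}$; hence $X_{w}=(x^{\sigma}_{w},x^{\tau}_{w})^{\texttt{t}}$ solves $A_{\sigma,\tau}X=b_{w}$.

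For uniqueness I would invoke Lemma \ref{lemma4.1}: since $R$ is non-degenerate and realizable on $M$, the $nk^{(n-1)}\times 2$ matrix $A_{\sigma,\tau}$ has rank $2$, so its two columns are linearly independent and the homogeneous system $A_{\sigma,\tau}X=0$ has only the trivial solution. Therefore $A_{\sigma,\tau}X=b_{w}$ admits at most one solution, and combined with the existence of $X_{w}$ this forces $X_{w}$ to be the unique solution for every $1\leq w\leq k$.

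I do not expect a serious obstacle, since both ingredients are already in place; the only point requiring care is the bookkeeping in the existence step, namely matching the multilinear expansion at each position $p$ against the specific ordering of the rows of $A_{\sigma,\tau}$ and the entries of $b_{w}$, and verifying that every row is accounted for so that $X_{w}$ satisfies the full system rather than a proper subsystem. Once that correspondence is checked, the rank-$2$ conclusion of Lemma \ref{lemma4.1} closes the argument immediately.
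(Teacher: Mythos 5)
Your proposal is correct and follows essentially the same route as the paper: the paper likewise derives existence by expanding $R_{i_{1}\cdots i_{t-1}wi_{t+1}\cdots i_{n}}=\langle\underline{R},\alpha_{i_{1}}\otimes\cdots\otimes\alpha_{w}\otimes\cdots\otimes\alpha_{i_{n}}\rangle$ and substituting $\alpha_{w}=x_{w}^{\sigma}\alpha_{\sigma}+x_{w}^{\tau}\alpha_{\tau}$, and gets uniqueness from ${\rm rank}(A_{\sigma,\tau})=2$ via Lemma \ref{lemma4.1}. No substantive difference.
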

\begin{proof}
Firstly, the solution of $A_{\sigma, \tau}X=b_{w}$  is unique since rank$(A_{\sigma, \tau})=2$ from Lemma \ref{lemma4.1}.

Secondly,
$R_{i_{1} \cdots i_{t-1}wi_{t+1} \cdots i_{n}}=\langle \underline{R},
\alpha_{i_{1}}\otimes\cdots\otimes\alpha_{i_{t-1}}\otimes\alpha_{w}\otimes\alpha_{i_{t+1}}\otimes\cdots\otimes\alpha_{i_{n}}\rangle$, where
$i_{j}\in[k]$ for $j\neq t$ and $\langle\cdot,\cdot\rangle$ denotes inner products. Then
\begin{equation*}
\begin{split}
&R_{i_{1} \cdots i_{t-1}wi_{t+1} \cdots i_{n}}\\
&=\langle \underline{R},\alpha_{i_{1}}\otimes\cdots\otimes\alpha_{i_{t-1}}\otimes\alpha_{w}\otimes\alpha_{i_{t+1}}\otimes\cdots \otimes  \alpha_{i_{n}} \rangle\\
&=\langle \underline{R},\alpha_{i_{1}}\otimes\cdots\otimes\alpha_{i_{t-1}}\otimes(x_{w}^{\sigma}\alpha_{\sigma}+x_{w}^{\tau}\alpha_{\tau})\otimes\alpha_{i_{t+1}}\otimes\cdots\otimes\alpha_{i_{n}}\rangle\\
&=x_{w}^{\sigma}\langle \underline{R},\alpha_{i_{1}}\otimes\cdots\otimes\alpha_{i_{t-1}}\otimes\alpha_{\sigma}
\otimes\alpha_{i_{t+1}}\otimes\cdots\otimes\alpha_{i_{n}}\rangle+x_{w}^{\tau}\langle \underline{R},\alpha_{i_{1}}\otimes\cdots\otimes\alpha_{i_{t-1}}\otimes\alpha_{\tau}
\otimes\alpha_{i_{t+1}}\otimes\cdots\otimes\alpha_{i_{n}}\rangle\\
&=x_{w}^{\sigma}R_{i_{1} \cdots i_{t-1}\sigma i_{t+1} \cdots i_{n}}+x_{w}^{\tau}R_{i_{1} \cdots i_{t-1}\tau i_{t+1} \cdots i_{n}}.\\
\end{split}
\end{equation*}
This implies that $\begin{pmatrix}
x^{\sigma}_{w}\\x^{\tau}_{w}
\end{pmatrix}$ is a solution of $A_{\sigma, \tau}X=b_{w}$ and completes the proof.
\end{proof}

Lemma \ref{core} implies that if the recognizer $R$ is non-degenerate and realizable on some basis $M$ of rank 2, we can find the
matrix $X_{\sigma, \tau}$ from $R$.



\begin{theorem}
\label{theorem3.2}
The recognizers $R_{1},R_{2},\cdots,R_{r}$, where
$R_{1}$ is non-degenerate, are simultaneously realizable on some basis $M$  of rank 2 iff
the following conditions are satisfied:
\begin{itemize}
\item There exist $\sigma, \tau\in[k]$ such that $R_{1}^{(\sigma, \tau)}$ is non-degenerate.
\item The linear equations $A_{\sigma, \tau}X=b_{w}$ from $R_{1}$ has the unique solution
$\begin{pmatrix}
x^\sigma_w\\
x^\tau_w
\end{pmatrix}$,
for $1\leq w\leq k$ and $(\alpha_{\sigma}, \alpha_{\tau})X_{\sigma, \tau}=M$, where $X_{\sigma, \tau}=
\begin{pmatrix}
x^\sigma_1&x^\sigma_{2}&\cdots&x^{\sigma}_{k}\\
x^\tau_1&x^\tau_{2}&\cdots&x^\tau_{k}
\end{pmatrix}$.
\item There exists a $2^{\ell}\times 2$ basis $M_{(2)}$ such that $R_{i}^{(\sigma,\tau)}$, the restriction of $R_{i}$ to $\sigma,\tau$, are simultaneously realizable on $M_{(2)}$ for $1\leq i\leq r$.
\item $R_{i}=R_{i}^{(\sigma,\tau)}\displaystyle X^{\otimes n}_{\sigma,\tau}$ for $1\leq i\leq r$.

\end{itemize}
\end{theorem}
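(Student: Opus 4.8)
The statement is an "iff" characterizing when recognizers $R_1,\dots,R_r$ (with $R_1$ non-degenerate) are simultaneously realizable on a rank-2 basis $M$. The four conditions essentially decompose the problem into (a) finding the right pair of coordinates $\sigma,\tau$, (b) reconstructing the transformation matrix $X_{\sigma,\tau}$ from $R_1$ alone via the linear system of Lemma \ref{core}, (c) solving the genuinely domain-size-2 realizability problem for the restrictions $R_i^{(\sigma,\tau)}$, and (d) checking that each full signature $R_i$ is recovered from its restriction by $X_{\sigma,\tau}^{\otimes n}$. The plan is to prove each direction by assembling the earlier lemmas.

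The strategy:
- For the forward direction, assume all $R_i$ are realizable on a common rank-2 basis $M$, i.e. $R_i=\underline{R_i}M^{\otimes n}$. Since $M$ has rank 2, Lemma \ref{lemma 4.2} gives $\sigma,\tau$ with $(\alpha_\sigma,\alpha_\tau)$ of rank 2 and a unique $X_{\sigma,\tau}$ with $(\alpha_\sigma,\alpha_\tau)X_{\sigma,\tau}=M$. First I would verify the first bullet: by Lemma \ref{degenerate for the restriction} applied to the non-degenerate $R_1$, the restriction $R_1^{(\sigma,\tau)}$ is non-degenerate. Next, Lemma \ref{core} applied to $R_1$ shows $A_{\sigma,\tau}X=b_w$ has the unique solution $\binom{x_w^\sigma}{x_w^\tau}$, giving the second bullet. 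For the third bullet, set $M_{(2)}=(\alpha_\sigma,\alpha_\tau)$; since $R_i^{(\sigma,\tau)}=\underline{R_i}(\alpha_\sigma,\alpha_\tau)^{\otimes n}$, the same standard signatures realize all restrictions on this common $2^\ell\times 2$ basis. Finally, the fourth bullet follows from the computation $R_i=\underline{R_i}M^{\otimes n}=\underline{R_i}\bigl((\alpha_\sigma,\alpha_\tau)X_{\sigma,\tau}\bigr)^{\otimes n}=\underline{R_i}(\alpha_\sigma,\alpha_\tau)^{\otimes n}X_{\sigma,\tau}^{\otimes n}=R_i^{(\sigma,\tau)}X_{\sigma,\tau}^{\otimes n}$, using functoriality of the tensor power.

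For the converse, assume the four conditions hold. From the third bullet, there is a $2^\ell\times 2$ basis $M_{(2)}=(\beta_1,\beta_2)$ and standard signatures $\underline{R_i}$ realizing each $R_i^{(\sigma,\tau)}=\underline{R_i}M_{(2)}^{\otimes n}$ simultaneously. The plan is to define the candidate rank-2 basis $M:=M_{(2)}X_{\sigma,\tau}$ (a $2^\ell\times k$ matrix) and check it works: I would compute $\underline{R_i}M^{\otimes n}=\underline{R_i}M_{(2)}^{\otimes n}X_{\sigma,\tau}^{\otimes n}=R_i^{(\sigma,\tau)}X_{\sigma,\tau}^{\otimes n}=R_i$, where the last equality is exactly the fourth bullet. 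This exhibits all $R_i$ as realizable on the single basis $M$. It remains to confirm $M$ has rank 2: since $X_{\sigma,\tau}$ has rank 2 (Lemma \ref{lemma 4.2}) and $M_{(2)}$ has rank 2, the product $M_{(2)}X_{\sigma,\tau}$ has rank 2, and moreover its $\sigma,\tau$ columns reproduce $M_{(2)}$ because the $\sigma,\tau$ columns of $X_{\sigma,\tau}$ form the identity (by the normalization in Lemma \ref{lemma 4.2}, since $(\alpha_\sigma,\alpha_\tau)X_{\sigma,\tau}=M$ forces the $\sigma$-th and $\tau$-th columns of $X_{\sigma,\tau}$ to be $e_1,e_2$).

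The main obstacle I expect is purely bookkeeping about the normalization of $X_{\sigma,\tau}$: in the converse direction one must be careful that the labels $\sigma,\tau$ are consistent between the reconstructed $X_{\sigma,\tau}$ and the chosen $M_{(2)}$, and that the $\sigma,\tau$ columns of $X_{\sigma,\tau}$ are indeed standard basis vectors so that $M_{(2)}$ is genuinely the $(\sigma,\tau)$-restriction of $M=M_{(2)}X_{\sigma,\tau}$. Everything else is a clean application of the tensor-power identity $(NP)^{\otimes n}=N^{\otimes n}P^{\otimes n}$ together with the already-established Lemmas \ref{degenerate for the restriction}, \ref{lemma 4.2}, and \ref{core}.
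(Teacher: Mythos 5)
Your proposal is correct and follows essentially the same route as the paper's own proof: the forward direction assembles Lemmas \ref{degenerate for the restriction}, \ref{lemma 4.2}, and \ref{core} with $M_{(2)}=(\alpha_\sigma,\alpha_\tau)$, and the converse defines $M=M_{(2)}X_{\sigma,\tau}$ and verifies realizability via the tensor-power identity. Your extra checks (that $M_{(2)}X_{\sigma,\tau}$ has rank 2 and that the $\sigma,\tau$ columns of $X_{\sigma,\tau}$ are standard basis vectors) are sound refinements the paper leaves implicit, but they do not change the argument.
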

\begin{proof}
If $R_{1},R_{2},\cdots,R_{r}$ are simultaneously realizable on $M$,
then $R_{1}^{(\sigma,\tau)}$ is non-degenerate from Lemma \ref{degenerate for the restriction}.
From Lemma \ref{lemma 4.2} and Lemma \ref{core}, $A_{\sigma, \tau}X=b_{w}$ has the unique solution
$\begin{pmatrix}
x^\sigma_w\\
x^\tau_w
\end{pmatrix}$
and $(\alpha_{\sigma}, \alpha_{\tau})X_{\sigma, \tau}=M$. Let $M_{(2)}=(\alpha_{\sigma}, \alpha_{\tau})$.
Note that $\underline{R}_{i}(\alpha_{\sigma}, \alpha_{\tau})^{\otimes n}=R_{i}^{(\sigma, \tau)}$ if
$\underline{R}_{i}M^{\otimes n}=R_{i}$, so $R_i^{(\sigma, \tau)}$ are simultaneously realized on $M_{(2)}$
for $1\leq i\leq r$.
Furthermore, $R_{i}=\underline{R}_{i}M^{\otimes n}=\underline{R}_{i}(\alpha_{\sigma}, \alpha_{\tau})^{\otimes n}X_{\sigma, \tau}^{\otimes n}=
R_{i}^{(\sigma, \tau)}X_{\sigma, \tau}^{\otimes n}$ for $1\leq i\leq r$.



Conversely, since $R_{i}^{(\sigma,\tau)}$ are simultaneously realizable by some $2^\ell \times 2$ basis for some $\ell$, there exists a common basis $M_{(2)}=(\gamma_{1}~\gamma_{2})$ such that
\[R_{i}^{(\sigma,\tau)}=\underline{R}_{i}(M_{(2)})^{\otimes n}, 1\leq i\leq r.
\]
So
\[
R_{i}=R_{i}^{(\sigma,\tau)}X_{\sigma, \tau}^{\otimes n}
=\underline{R}_{i}M_{(2)}^{\otimes n}
X_{\sigma, \tau}^{\otimes n}=\underline{R}_{i}(M_{(2)}
X_{\sigma, \tau})^{\otimes n}, 1\leq i\leq r.
\]
This implies that $R_{1},R_{2},\cdots,R_{r}$ are simultaneously realizable on $M=M_{(2)}
X_{\sigma, \tau}$.
\end{proof}

\begin{remark}
Since the $2\times k$ matrix $X_{\sigma, \tau}$ has rank 2, there exists a $k\times k$
invertible
 matrix $X'_{\sigma, \tau}$ such that $X_{\sigma, \tau}X'_{\sigma, \tau}=
 \begin{pmatrix}
1&0&0&\cdots&0\\
0&1&0&\cdots&0
\end{pmatrix}$ by linear algebra.
\end{remark}


\begin{lemma}
\label{lemma3.3}
Let $R=\underline{R}M^{\otimes n}$, $R'=R(X'_{\sigma, \tau})^{\otimes n}$ and $\check{R}$ be the
restriction of $R'$ to  $\{1,2\}$,
 then $\check{R}=R^{(\sigma,\tau)}$.
\end{lemma}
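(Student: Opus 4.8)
The plan is to reduce the statement to a single matrix identity and then read off the restriction entrywise. The starting point is the factorization $M=(\alpha_{\sigma},\alpha_{\tau})X_{\sigma,\tau}$ supplied by Lemma~\ref{lemma 4.2}, together with the observation already recorded in the proof of Lemma~\ref{degenerate for the restriction} that $R^{(\sigma,\tau)}=\underline{R}(\alpha_{\sigma},\alpha_{\tau})^{\otimes n}$. Using the mixed-product property of the Kronecker (tensor) power, namely $(PQ)^{\otimes n}=P^{\otimes n}Q^{\otimes n}$ whenever the block dimensions match, I would first rewrite
\[
R=\underline{R}M^{\otimes n}=\underline{R}\big((\alpha_{\sigma},\alpha_{\tau})X_{\sigma,\tau}\big)^{\otimes n}=\underline{R}(\alpha_{\sigma},\alpha_{\tau})^{\otimes n}X_{\sigma,\tau}^{\otimes n}=R^{(\sigma,\tau)}X_{\sigma,\tau}^{\otimes n}.
\]
This is exactly the identity already derived inside the proof of Theorem~\ref{theorem3.2}, so no new input is needed at this stage.

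Next I would substitute this expression into the definition of $R'$ and cancel $X_{\sigma,\tau}$ against its right inverse. Since $R'=R(X'_{\sigma,\tau})^{\otimes n}$, the same mixed-product property gives
\[
R'=R^{(\sigma,\tau)}X_{\sigma,\tau}^{\otimes n}(X'_{\sigma,\tau})^{\otimes n}=R^{(\sigma,\tau)}\big(X_{\sigma,\tau}X'_{\sigma,\tau}\big)^{\otimes n}.
\]
By the preceding Remark, $X_{\sigma,\tau}X'_{\sigma,\tau}=E$, where $E$ is the $2\times k$ matrix whose first two columns form $I_{2}$ and whose remaining columns vanish. Hence $R'=R^{(\sigma,\tau)}E^{\otimes n}$, which reduces the entire claim to understanding how the very simple matrix $E$ acts on the arity-$n$ tensor.

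Finally, I would unpack the restriction entrywise. Writing out the contraction, for indices $j_{1},\dots,j_{n}\in[k]$ one has $R'_{j_{1}\cdots j_{n}}=\sum_{a_{1},\dots,a_{n}\in\{1,2\}}R^{(\sigma,\tau)}_{a_{1}\cdots a_{n}}\prod_{t=1}^{n}E_{a_{t}j_{t}}$, where the row index $a_{t}$ of $E$ is contracted against $R^{(\sigma,\tau)}$ and the column index $j_{t}$ stays free. Because $E_{aj}=\delta_{aj}$ for $j\in\{1,2\}$ and $E_{aj}=0$ for $j\geq 3$, restricting to $j_{t}\in\{1,2\}$ collapses each inner sum to the single surviving term $a_{t}=j_{t}$, yielding $R'_{j_{1}\cdots j_{n}}=R^{(\sigma,\tau)}_{j_{1}\cdots j_{n}}$. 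Since $\check{R}$ is by definition the restriction of $R'$ to $\{1,2\}$, this is precisely $\check{R}=R^{(\sigma,\tau)}$. The only point requiring care is the index bookkeeping in this last step, keeping straight which indices of $E$ are summed and which remain free; but there is no genuine obstacle, as $E$ simply embeds the two-element coordinate block into $[k]$ and the verification is purely formal.
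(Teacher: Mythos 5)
Your proposal is correct and follows essentially the same route as the paper: both arguments rest on the factorization $M=(\alpha_{\sigma},\alpha_{\tau})X_{\sigma,\tau}$ from Lemma~\ref{lemma 4.2}, the identity $X_{\sigma,\tau}X'_{\sigma,\tau}=\bigl(\begin{smallmatrix}1&0&0&\cdots&0\\0&1&0&\cdots&0\end{smallmatrix}\bigr)$ from the Remark, and the mixed-product property of tensor powers, with the paper writing the result as $R'=\underline{R}(\alpha_{\sigma}~\alpha_{\tau}~0~\cdots~0)^{\otimes n}$ where you write $R'=R^{(\sigma,\tau)}E^{\otimes n}$. Your explicit entrywise verification of the final restriction step is merely a more detailed spelling-out of what the paper leaves implicit.
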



\begin{proof}
\begin{equation*}\label{equation-for-R_0}
\begin{split}
R'&=\underline{R} M^{\otimes n}(X'_{\sigma, \tau})^{\otimes n}\\
&=\underline{R} (\alpha_\sigma, \alpha_\tau)^{\otimes n} X_{\sigma, \tau}^{\otimes n}(X'_{\sigma, \tau})^{\otimes n}\\
&=
\underline{R}
\begin{pmatrix}
\alpha_{\sigma}&\alpha_{\tau}&0&\cdots&0
\end{pmatrix}^{\otimes n}.
\end{split}
\end{equation*}
So $\check{R}=R^{(\sigma,\tau)}$.
\end{proof}


\vspace{.1in}

Now let $R_{1},R_{2},\cdots,R_{r}$  and $G_{1},G_{2},\cdots,G_{g}$ be recognizers and generators in a holographic algorithm.
Without loss of generality, assume that $R_{1}$ is non-degenerate and realizable on $M$, then we can find $X_{\sigma, \tau}$, $X'_{\sigma, \tau}$
from $R_{1}$.
Let $R_{i}'=R_{i}\displaystyle (X'_{\sigma, \tau})^{\otimes n}$ for $1\leq i\leq r$, $G_{j}'=\displaystyle (X'^{-1}_{\sigma, \tau})^{\otimes n}G_{j}$
for $1\leq j\leq g$. Note that $R'_i=\underline{R}_{i}  (\alpha_\sigma, \alpha_\tau, 0, 0, ..., 0)^{\otimes n}$ from the proof of Lemma \ref{lemma3.3}. And
let $\check{R}_{i}, \check{G}_{j}$ be the restriction of $R_{i}', G_{j}'$ to  $\{1, 2\}$  respectively,  then we have the following theorem.


\begin{theorem}
\label{theorem3.3}
Assume that $R_{1},R_{2},\cdots,R_{r}$ are simultaneously realizable on a basis of rank 2 and $R_{1}$ is non-degenerate, then we have
\begin{itemize}
\item If $R_{1},R_{2},\cdots,R_{r}$ and $G_{1},G_{2},\cdots,G_{g}$ are simultaneously realizable on a $2^{\ell}\times k$ basis $M=(\alpha_{1}, \alpha_{2}, \cdots, \alpha_{k})$ of rank 2, then  $\check{R}_{1},\check{R}_{2},\cdots,\check{R}_{r}$ and $\check{G}_{1},\check{G}_{2},\cdots,\check{G}_{g}$ are simultaneously realizable on the sub-matrix $M_{(2)}=(\alpha_{\sigma}, \alpha_{\tau})$ of $M$, where $M_{(2)}$ has rank 2.
\item If $\check{R}_{1},\check{R}_{2},\cdots,\check{R}_{r}$ and $\check{G}_{1},\check{G}_{2},\cdots,\check{G}_{g}$ are simultaneously realizable on a $2^{\ell}\times 2$ basis $M_{(2)}=(\gamma_{1}, \gamma_{2})$ of rank 2, then we can get $X_{\sigma, \tau}$ from $R_{1}$ such that $R_{1},R_{2},\cdots,R_{r}$ and $G_{1},G_{2},\cdots,G_{g}$ are simultaneously realizable on the $2^{\ell}\times k$ basis $M_{(2)}X_{\sigma, \tau}$.
\end{itemize}
i.e., $R_{1},R_{2},\cdots,R_{r}$ and $G_{1},G_{2},\cdots,G_{g}$ are simultaneously realizable on a $2^{\ell}\times k$ basis $M$ of rank 2
if and only if $\check{R}_{1},\check{R}_{2},\cdots,\check{R}_{r}$ and $\check{G}_{1},\check{G}_{2},\cdots,\check{G}_{g}$ are simultaneously realizable on a $2^{\ell}\times 2$ basis $M_{(2)}$ of rank 2.
\end{theorem}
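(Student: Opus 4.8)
The plan is to prove the biconditional by establishing the two bullet points, which together are exactly the two implications, and then to observe that most of the machinery has already been assembled in Theorem \ref{theorem3.2} and Lemma \ref{lemma3.3}. The central organizing idea is that the $k\times k$ invertible matrix $X'_{\sigma,\tau}$ (from the Remark) lets me pass back and forth between a signature on domain size $k$ and its restriction to the coordinates $\{1,2\}$ without losing information, because $X'_{\sigma,\tau}$ is a genuine change of basis on $\mathbf{C}^k$. The key computation, already recorded in Lemma \ref{lemma3.3}, is that applying $(X'_{\sigma,\tau})^{\otimes n}$ to a recognizer realized on $M$ produces a signature supported on the first two coordinates whose $\{1,2\}$-restriction equals the original $(\sigma,\tau)$-restriction. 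So the strategy is to transport everything through $X'_{\sigma,\tau}$, reduce to a domain-size-2 question on $M_{(2)}=(\alpha_\sigma,\alpha_\tau)$, and then invert.

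For the forward direction (first bullet), I would start from a common basis $M$ of rank $2$ realizing all the $R_i$ and $G_j$. First I extract $X_{\sigma,\tau}$ and $X'_{\sigma,\tau}$ from the non-degenerate $R_1$, as licensed by Lemma \ref{core} and the Remark. Then I apply Lemma \ref{lemma3.3} to each recognizer: $R_i'=R_i(X'_{\sigma,\tau})^{\otimes n}=\underline{R}_i(\alpha_\sigma,\alpha_\tau,0,\ldots,0)^{\otimes n}$, whose $\{1,2\}$-restriction $\check R_i$ equals $\underline{R}_i(\alpha_\sigma,\alpha_\tau)^{\otimes n}=R_i^{(\sigma,\tau)}$, and this is manifestly a signature realized on $M_{(2)}=(\alpha_\sigma,\alpha_\tau)$. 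The generators need the dual treatment: since realizability for a generator reads $M^{\otimes n}G_j=\underline{G}_j$, I set $G_j'=(X'^{-1}_{\sigma,\tau})^{\otimes n}G_j$ so that $M^{\otimes n}G_j=(M(X'_{\sigma,\tau}))^{\otimes n}G_j'=(\alpha_\sigma,\alpha_\tau,0,\ldots,0)^{\otimes n}G_j'$; because the columns past the second are zero, only the $\{1,2\}$-entries $\check G_j$ of $G_j'$ contribute, giving $M_{(2)}^{\otimes n}\check G_j=\underline{G}_j$, i.e. $\check G_j$ is realized on $M_{(2)}$. Thus all the $\check R_i$ and $\check G_j$ share the common rank-$2$ basis $M_{(2)}$.

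For the converse (second bullet), I would reverse these steps. Given a common $2^\ell\times 2$ rank-$2$ basis $M_{(2)}=(\gamma_1,\gamma_2)$ realizing all $\check R_i$ and $\check G_j$, I recover $X_{\sigma,\tau}$ from $R_1$ via Lemma \ref{core} and set $M=M_{(2)}X_{\sigma,\tau}$, exactly as in the converse part of Theorem \ref{theorem3.2}. For recognizers the computation $R_i=\check R_i X_{\sigma,\tau}^{\otimes n}=\underline{R}_i M_{(2)}^{\otimes n}X_{\sigma,\tau}^{\otimes n}=\underline{R}_i(M_{(2)}X_{\sigma,\tau})^{\otimes n}$ is already carried out there, using $R_i=R_i^{(\sigma,\tau)}X_{\sigma,\tau}^{\otimes n}$ (which holds by Theorem \ref{theorem3.2}, since the hypotheses grant simultaneous realizability of the $R_i$ on a rank-$2$ basis). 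For generators I run the dual identity: starting from $M_{(2)}^{\otimes n}\check G_j=\underline{G}_j$ and unfolding $G_j$ from $\check G_j$ through $X'_{\sigma,\tau}$, one checks $M^{\otimes n}G_j=\underline{G}_j$ with $M=M_{(2)}X_{\sigma,\tau}$.

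The main obstacle I anticipate is bookkeeping the generator side correctly, since generators transform contravariantly ($M^{\otimes n}G=\underline G$) rather than covariantly, so the roles of $X_{\sigma,\tau}$ and $X'_{\sigma,\tau}$ and their inverses must be tracked with care — in particular, justifying that passing to $\check G_j$ (the $\{1,2\}$-restriction of $G_j'$) loses no realizability information requires the observation that $(\alpha_\sigma,\alpha_\tau,0,\ldots,0)^{\otimes n}$ annihilates every coordinate of $G_j'$ outside $\{1,2\}^n$, so that the generator condition on $M$ collapses exactly to the generator condition for $\check G_j$ on $M_{(2)}$. Once that equivalence is pinned down, both directions are immediate consequences of the identities already established, and the final biconditional follows by combining the two bullets.
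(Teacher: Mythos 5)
Your proposal is correct and follows essentially the same route as the paper's proof: both directions transport the signatures through $X'_{\sigma,\tau}$, use the identity $MX'_{\sigma,\tau}=(\alpha_{\sigma}\ \alpha_{\tau}\ 0\cdots 0)$ (resp.\ $M_{(2)}X_{\sigma,\tau}X'_{\sigma,\tau}=(\gamma_{1}\ \gamma_{2}\ 0\cdots 0)$) so that the zero columns collapse the generator condition onto the $\{1,2\}$-restriction, and invoke Theorem \ref{theorem3.2} and Lemma \ref{lemma3.3} for the recognizer identities in the converse. No substantive differences from the paper's argument.
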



\begin{proof} If there is a common basis $M$ such that $R_{i}=\underline{R}_{i}M^{\otimes n}, M^{\otimes n}G_{j}=\underline{G}_{j}$, then
\begin{equation*}
R_{i}'=R_{i}\displaystyle X'^{\otimes n}_{\sigma,\tau}=\underline{R}_{i}(MX'_{\sigma, \tau})^{\otimes n},
\end{equation*}
\begin{equation*}
(MX'_{\sigma, \tau})^{\otimes n} G_{j}'=
(MX'_{\sigma, \tau})^{\otimes n} (X'^{-1}_{\sigma,\tau})^{\otimes n}G_{j}=M^{\otimes n}G_{j}=\underline{G}_{j}.
\end{equation*}
From $M=(\alpha_{\sigma}, \alpha_{\tau})X_{\sigma, \tau}$, we have $MX'_{\sigma, \tau}=
\begin{pmatrix}
\alpha_{\sigma}&\alpha_{\tau}&0&\cdots&0
\end{pmatrix}$, then
\begin{center}
$\check{R}_{i}=\underline{R}_{i}(\alpha_{\sigma}~\alpha_{\tau})^{\otimes n}$,
~~~~ $(\alpha_{\sigma}~\alpha_{\tau})^{\otimes n}\check{G}_{j}=\underline{G}_{j}$.
\end{center}
Thus $\check{R}_{1},\check{R}_{2},\cdots,\check{R}_{r}$ and $\check{G}_{1}, \check{G}_{2}, \cdots$, $\check{G}_{g}$ are simultaneously realizable
on $(\alpha_{\sigma}~\alpha_{\tau})$.


Conversely,  if $\check{R}_{1},\check{R}_{2},\cdots,\check{R}_{r}$ and $\check{G}_{1},\check{G}_{2},\cdots,\check{G}_{g}$  are simultaneously realizable, then there is a $2^{\ell}\times 2$ basis $M_{(2)}=(\gamma_{1}~\gamma_{2})$
such that


\begin{center}
$\check{R}_{i}=\underline{R}_{i}M_{(2)}^{\otimes n}$, $M_{(2)}^{\otimes n}\check{G}_{j}=\underline{G}_{j}$.
\end{center}


Recall that $R_{1},R_2,\cdots,R_{r}$ are assumed to be simultaneously realizable.
Then by Theorem \ref{theorem3.2} and Lemma \ref{lemma3.3}
\[
\begin{split}
R_{i}&=R_{i}^{(\sigma, \tau)}\displaystyle X^{\otimes n}_{\sigma, \tau}=\check{R}_{i}\displaystyle X^{\otimes n}_{\sigma, \tau}
=\underline{R}_{i}(M_{(2)}X_{\sigma, \tau})^{\otimes n}.\end{split}\]
Furthermore,  since
\[
\begin{split}
&\begin{pmatrix}
\gamma_{1}&\gamma_{2}&0&\cdots&0
\end{pmatrix}\\
&=(\gamma_{1}~\gamma_{2})
\begin{pmatrix}
1&0&0&\cdots&0\\
0&1&0&\cdots&0
\end{pmatrix}\\
&=M_{(2)}X_{\sigma, \tau}X'_{\sigma, \tau}\end{split}\]
and
\[\begin{pmatrix}
1&0&0&\cdots&0\\
0&1&0&\cdots&0
\end{pmatrix}^{\otimes n}G'_{j}=\check{G}_{j},\]
we have
\[
\begin{split}
\underline{G}_{j}=M_{(2)}^{\otimes n}\check{G}_{j}=\begin{pmatrix}
\gamma_{1}&\gamma_{2}&0&\cdots&0
\end{pmatrix}^{\otimes n}G_{j}'=(M_{(2)}X_{\sigma, \tau})^{\otimes n}G_{j}.
\end{split}
\]
This implies that $R_{1},R_{2},\cdots,R_{r}$ and $G_{1},G_{2},\cdots,G_{g}$
 are simultaneously realizable over the basis  $M_{(2)}X_{\sigma, \tau}$.
\end{proof}

\begin{remark}
Note that $\check{R}_{1},\check{R}_{2},\cdots,\check{R}_{r}$ and $\check{G}_{1},\check{G}_{2},\cdots,\check{G}_{g}$ are signatures on domain size 2.
If all of the recognizers are degenerate, the holographic algorithm is trivial. Otherwise,
we can reduce SRP of $R_{1}, R_{2}, \cdots, R_{r}$ (on domain size $k\geq 3$) on a basis of rank 2 to SRP on domain size 2 by Theorem \ref{theorem3.2}.
Furthermore, if  $R_{1}, R_{2}, \cdots, R_{r}$ can be simultaneously realized on a basis of rank 2, we can reduce SRP on domain size $k\geq 3$ to SRP on domain size 2 by Theorem \ref{theorem3.3}.
\end{remark}

\begin{coro}\label{coro4.1}
The {\it contraction} of $\bigotimes_{j=1}^{g}\check{G}_{j}$ and $\bigotimes_{i=1}^{r}\check{R}_{i}$ is equal to the {\it contraction} of $\bigotimes_{j=1}^{g}G_{j}$ and $\bigotimes_{i=1}^{r}R_{i}$.
\end{coro}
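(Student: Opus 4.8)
The plan is to show that \emph{both} contractions coincide with the contraction of the standard signatures $\bigotimes_{i=1}^{r}\underline{R}_i$ and $\bigotimes_{j=1}^{g}\underline{G}_j$. These standard signatures are intrinsic to the underlying matchgates and hence are common to the two situations, so once each side is reduced to them the equality follows at once. First I would record the realizability relations on the two sides. On domain $k$, simultaneous realizability on the rank-2 basis $M$ gives $R_i=\underline{R}_iM^{\otimes n_i}$ and $M^{\otimes m_j}G_j=\underline{G}_j$, where $n_i,m_j$ denote the arities of $R_i,G_j$. On domain $2$, the computations in the proof of Theorem \ref{theorem3.3} supply the parallel relations $\check{R}_i=\underline{R}_i(\alpha_\sigma,\alpha_\tau)^{\otimes n_i}$ and $(\alpha_\sigma,\alpha_\tau)^{\otimes m_j}\check{G}_j=\underline{G}_j$, crucially with the \emph{same} $\underline{R}_i,\underline{G}_j$.

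The core step is a single transfer identity: for any basis $B$ with $R_i=\underline{R}_iB^{\otimes n_i}$ and $B^{\otimes m_j}G_j=\underline{G}_j$, the contraction of $\bigotimes_j G_j$ against $\bigotimes_i R_i$ equals the contraction of $\bigotimes_j\underline{G}_j$ against $\bigotimes_i\underline{R}_i$. To prove it, write the contraction as a single sum over assignments to the $N$ connecting edges, where $N$ equals both $\sum_i n_i$ and $\sum_j m_j$ because $\Omega$ is a matchgrid. Factoring the tensor powers, $\bigotimes_i R_i=\bigl(\bigotimes_i\underline{R}_i\bigr)B^{\otimes N}$, and then pushing the block $B^{\otimes N}$ across the contraction onto the generator side gives $B^{\otimes N}\bigl(\bigotimes_j G_j\bigr)=\bigotimes_j\underline{G}_j$, so the whole contraction collapses to $\mathrm{contr}\bigl(\bigotimes_j\underline{G}_j,\bigotimes_i\underline{R}_i\bigr)$. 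Applying this identity once with $B=M$ and once with $B=(\alpha_\sigma,\alpha_\tau)$ shows that the two contractions of the corollary both equal $\mathrm{contr}\bigl(\bigotimes_j\underline{G}_j,\bigotimes_i\underline{R}_i\bigr)$, hence are equal. Conceptually this is just the Holant theorem applied twice: since both matchgrids are built from the \emph{same} matchgates (those with standard signatures $\underline{R}_i,\underline{G}_j$), each contraction equals $\mathrm{PerfMatch}(\Gamma)$ for the common underlying graph $\Gamma$.

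The step I expect to demand the most care is the transfer identity itself, specifically the bookkeeping that lets one factor $B^{\otimes N}$ out of the recognizer product and move it onto the generator product. One must verify that the $N$ legs of $B^{\otimes N}$ align correctly with the edge-indices being summed over, which holds precisely because the $f$ connecting edges of the matchgrid pair each recognizer input slot with a unique generator output slot; thus the contraction index set is shared between the two products and the $N$-fold tensor structure matches leg-by-leg. This is exactly the algebraic content underlying Valiant's Holant theorem, here specialized to the two bases $M$ and $(\alpha_\sigma,\alpha_\tau)$, so no new difficulty beyond careful indexing is expected.
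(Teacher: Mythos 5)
Your proof is correct in the setting where it is applied, but it takes a genuinely different route from the paper's, and it silently needs a stronger hypothesis. The paper argues in two purely linear-algebraic steps that never mention standard signatures: first, since $R'_i=R_i(X'_{\sigma,\tau})^{\otimes n}$ and $G'_j=(X'^{-1}_{\sigma,\tau})^{\otimes n}G_j$ with $X'_{\sigma,\tau}$ invertible, the transformation and its inverse cancel across each connecting edge, so $\mathrm{contr}(\bigotimes_j G'_j,\bigotimes_i R'_i)=\mathrm{contr}(\bigotimes_j G_j,\bigotimes_i R_i)$; second, because $R'_i=\underline{R}_i(\alpha_\sigma,\alpha_\tau,0,\dots,0)^{\otimes n}$ vanishes whenever some index lies outside $\{1,2\}$, the contraction sum over $[k]$-valued edge assignments collapses to the sum over $\{1,2\}$-valued ones, i.e.\ to $\mathrm{contr}(\bigotimes_j\check G_j,\bigotimes_i\check R_i)$. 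This argument uses only that the \emph{recognizers} are realizable on $M$; the generators $G_j$ may be arbitrary tensors. Your route instead reduces both contractions to $\mathrm{contr}(\bigotimes_j\underline{G}_j,\bigotimes_i\underline{R}_i)$ via the Holant-style transfer identity, which forces you to assume $M^{\otimes m_j}G_j=\underline{G}_j$ and $(\alpha_\sigma,\alpha_\tau)^{\otimes m_j}\check G_j=\underline{G}_j$, i.e.\ simultaneous realizability of the generators as well. In the one place the corollary is invoked (the degenerate-$\check G_j$ branch of the collapse theorem) that extra hypothesis is available, so your argument suffices there; but as a standalone statement the corollary is broader than what you prove, and the paper's support-restriction argument is the cleaner and more general one. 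Your transfer identity itself and the arity bookkeeping ($N=\sum_i n_i=\sum_j m_j$) are fine.
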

\begin{proof}
 The {\it contraction} of $\bigotimes_{j=1}^{g}G'_{j}$ and $\bigotimes_{i=1}^{r}R'_{i}$ is equal to the {\it contraction} of $\bigotimes_{j=1}^{g}G_{j}$ and $\bigotimes_{i=1}^{r}R_{i}$ since $R'_{i}=R_{i}(X_{\sigma, \tau}')^{\otimes n}, G'_{j}=(X_{\sigma, \tau}'^{-1})^{\otimes n}G_{j}$.
Furthermore, the {\it contraction} of $\bigotimes_{j=1}^{g}\check{G}_{j}$ and $\bigotimes_{i=1}^{r}\check{R}_{i}$ is equal to the {\it contraction} of $\bigotimes_{j=1}^{g}G'_{j}$ and $\bigotimes_{i=1}^{r}R'_{i}$ since $R'_{i~i_{1}i_{2}\cdots i_{n}}=0$ if there exists $i_{j}\notin \{1, 2\}$ for $1\leq j\leq n$.
So the {\it contraction} of $\bigotimes_{j=1}^{g}\check{G}_{j}$ and $\bigotimes_{i=1}^{r}\check{R}_{i}$ is equal to the {\it contraction} of $\bigotimes_{j=1}^{g}G_{j}$ and $\bigotimes_{i=1}^{r}R_{i}$.
\end{proof}

\begin{coro}\label{coro4.2}
If
$R_{1},R_{2},\cdots,R_{r}$ and $G_{1},G_{2},\cdots,G_{g}$ are simultaneously realizable on a $2^{\ell}\times k$ basis of rank 2 and $R_{1}$ is non-degenerate,
then there exists a $2^{\ell}\times 2$ basis $M_{(2)}$ such that  $\check{R}_{1},\check{R}_{2},\cdots,\check{R}_{r}$ and $\check{G}_{1},\check{G}_{2},\cdots,\check{G}_{g}$ are simultaneously realizable on $M_{(2)}$ and $R_{1},R_{2},\cdots,R_{r}$ and $G_{1},G_{2},\cdots,G_{g}$ are simultaneously realizable on $M_{(2)}X_{\sigma, \tau}$.
\end{coro}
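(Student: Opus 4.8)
The plan is to read this corollary off directly from Theorem \ref{theorem3.3} together with Lemma \ref{lemma 4.2}, since its hypotheses are precisely those under which the reduction of Theorem \ref{theorem3.3} was set up. First I would fix the indices $\sigma,\tau$: because $R_{1}$ is non-degenerate and realizable on the given rank-$2$ basis $M$, Lemma \ref{degenerate for the restriction} guarantees a pair $\sigma,\tau\in[k]$ for which $R_{1}^{(\sigma,\tau)}$ is non-degenerate, and these are exactly the indices used to form the sub-matrix $(\alpha_{\sigma},\alpha_{\tau})$ and the reduced signatures $\check{R}_{i},\check{G}_{j}$. Applying the first bullet of Theorem \ref{theorem3.3} to the simultaneously realizable family $R_{1},\dots,R_{r},G_{1},\dots,G_{g}$ then yields that $\check{R}_{1},\dots,\check{R}_{r},\check{G}_{1},\dots,\check{G}_{g}$ are simultaneously realizable on the $2^{\ell}\times 2$ rank-$2$ basis $M_{(2)}=(\alpha_{\sigma},\alpha_{\tau})$, which is the first assertion.

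For the second assertion I would simply identify the basis $M_{(2)}X_{\sigma,\tau}$. The matrix $X_{\sigma,\tau}$ is recovered from $R_{1}$ alone, by solving the systems $A_{\sigma,\tau}X=b_{w}$ whose unique solutions are the columns $\binom{x^{\sigma}_{w}}{x^{\tau}_{w}}$ of $X_{\sigma,\tau}$ (Lemma \ref{core}); and by Lemma \ref{lemma 4.2} this same $X_{\sigma,\tau}$ is characterized by $(\alpha_{\sigma},\alpha_{\tau})X_{\sigma,\tau}=M$. Substituting $M_{(2)}=(\alpha_{\sigma},\alpha_{\tau})$ therefore gives $M_{(2)}X_{\sigma,\tau}=(\alpha_{\sigma},\alpha_{\tau})X_{\sigma,\tau}=M$, so that realizability of $R_{1},\dots,R_{r},G_{1},\dots,G_{g}$ on $M_{(2)}X_{\sigma,\tau}$ is nothing more than their assumed realizability on $M$.

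The thing to verify carefully, rather than any genuine obstacle, is the internal consistency of the named objects: the indices $\sigma,\tau$, the matrix $X_{\sigma,\tau}$ (and its completion $X'_{\sigma,\tau}$) used to define $\check{R}_{i},\check{G}_{j}$ must be exactly those produced from $R_{1}$, and the witness $M_{(2)}$ for the reduced problem must coincide with the sub-matrix $(\alpha_{\sigma},\alpha_{\tau})$ of $M$. Once this identification is in place, no further computation is needed: the corollary is a bookkeeping recombination of the forward direction of Theorem \ref{theorem3.3} (giving the first conclusion) with the identity $M_{(2)}X_{\sigma,\tau}=M$ from Lemma \ref{lemma 4.2} (giving the second). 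I would note in passing that the converse direction of Theorem \ref{theorem3.3} is not required here, since realizability of the original signatures is already part of the hypothesis.
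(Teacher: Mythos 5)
Your proposal is correct and matches the paper's intent: the corollary is stated without a separate proof precisely because it is the forward direction of Theorem \ref{theorem3.3} (with $M_{(2)}=(\alpha_{\sigma},\alpha_{\tau})$) combined with the identity $(\alpha_{\sigma},\alpha_{\tau})X_{\sigma,\tau}=M$ from Lemma \ref{lemma 4.2}, exactly as you argue. Your closing remark about checking that $\sigma,\tau$ and $X_{\sigma,\tau}$ are the ones produced from $R_{1}$ is the right (and only) point of care.
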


\section{The Collapse Theorem}

\begin{theorem}\cite{string6}(The Collapse Theorem on Domain Size 2)\label{collapse for domain 2}
Let $R_{1}, R_{2}, \cdots, R_{r}$ and $G_{1}, G_{2}, \cdots, G_{g}$ be recognizers and generators respectively that a holographic algorithm on domain size 2 employs, where there is at least one generator that is non-degenerate. If there exists a $2^{\ell}\times 2$ basis $M$ of rank 2 such that $R_{1}, R_{2}, \cdots, R_{r}$ and $G_{1}, G_{2}, \cdots, G_{g}$ are simultaneously realizable, then there exists a $2\times 2$ basis $M'$ of rank 2 such that $R_{1}, R_{2}, \cdots, R_{r}$ and $G_{1}, G_{2}, \cdots, G_{g}$ are simultaneously realizable.
\end{theorem}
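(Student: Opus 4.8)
The plan is to recast realizability through the matchgate identities and then to use the non-degenerate generator to show that all the information carried by the $2^{\ell}\times 2$ basis is already captured by two coordinates. Recall that a standard signature is realizable by a matchgate if and only if it satisfies the matchgate identities (see \cite{string1,string6}), and that every matchgate signature obeys the parity condition: its nonzero entries are supported on bit-strings of one fixed parity. Writing $M=(\alpha,\beta)$ with $\alpha,\beta\in{\bf C}^{2^{\ell}}$ linearly independent, simultaneous realizability of $G_{1},\dots,G_{g}$ and $R_{1},\dots,R_{r}$ on $M$ says exactly that each $M^{\otimes n}G_{j}$ and each $\underline{R}_{i}$ (with $\underline{R}_{i}M^{\otimes n}=R_{i}$) is a matchgate signature. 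The objective is to build a $2\times 2$ rank-$2$ basis $M'$ on which the same abstract signatures $R_{i},G_{j}$ remain realizable.

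First I would extract structural information about the columns $\alpha,\beta$ from the non-degenerate generator, say $G_{1}$, of arity $n\ge 2$. Index the entries of $\underline{G}_{1}=M^{\otimes n}G_{1}$ by $(b_{1},\dots,b_{n})$ with $b_{t}\in\{0,1\}^{\ell}$. Fixing $b_{2},\dots,b_{n}$ and varying only $b_{1}$ gives $\underline{G}_{1}^{b_{1}b_{2}\cdots b_{n}}=\alpha_{b_{1}}S_{0}+\beta_{b_{1}}S_{1}$, where the coefficient pair $(S_{0},S_{1})$ depends on the fixed blocks. The parity condition forces $\alpha_{b_{1}}S_{0}+\beta_{b_{1}}S_{1}=0$ whenever $|b_{1}|$ has the wrong parity, so for each parity class of indices on which some admissible fixing yields $(S_{0},S_{1})\neq 0$, the two-dimensional columns $(\alpha_{b},\beta_{b})$ over that class all lie on a single line. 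The role of non-degeneracy is precisely to guarantee that both the even-weight and the odd-weight classes fall under this conclusion: were one class to escape it, the coefficient vectors $(S_{0},S_{1})$ would be so confined as to force $G_{1}$ to factor as a tensor product, contradicting the non-degeneracy of $G_{1}$ (Lemma \ref{degenerate}).

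Consequently $(\alpha_{b},\beta_{b})\parallel(p_{0},p_{1})$ for all even-weight $b$ and $(\alpha_{b},\beta_{b})\parallel(q_{0},q_{1})$ for all odd-weight $b$, with $(p_{0},p_{1})\not\parallel(q_{0},q_{1})$ since $M$ has rank $2$. I would then set
\[
M'=\begin{pmatrix}p_{0}&p_{1}\\ q_{0}&q_{1}\end{pmatrix},
\]
a $2\times 2$ matrix of rank $2$, whose two rows correspond to the even (weight-$0$) and odd (weight-$1$) coordinates of a size-$1$ basis. The passage from $M$ to $M'$ is a linear compression collapsing each parity class of the $2^{\ell}$ rows onto its common direction; because matchgate signatures are graded by weight parity, this compression is compatible with the block structure of $M^{\otimes n}$.

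The remaining, and main, task is to verify that realizability transfers to $M'$. For the generators one checks that $(M')^{\otimes n}G_{j}$ is again a matchgate signature, and for the recognizers one must produce standard signatures $\underline{R}'_{i}$ with $\underline{R}'_{i}(M')^{\otimes n}=R_{i}$ that still satisfy the matchgate identities. I expect this verification to be the hard part, because the parity condition alone is necessary but not sufficient for realizability: one has to show that the full system of matchgate identities of the size-$\ell$ standard signatures descends to the size-$1$ ones under the compression, and that this happens simultaneously for all generators and all recognizers while rank $2$ is preserved. This is the step where the detailed matchgate-identity machinery of \cite{string6} is needed, and it is what upgrades the heuristic line-of-support argument above into a genuine basis collapse.
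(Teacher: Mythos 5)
First, a point of context: the paper does not prove this theorem at all --- it is imported verbatim from \cite{string6} as a black box, so there is no ``paper proof'' to match your argument against. Judged on its own terms, your proposal correctly identifies the strategy that \cite{string6} actually uses (the parity condition on matchgate signatures, applied to the standard signature of a non-degenerate generator, forces the $2^{\ell}$ rows of $M$ to collapse onto two directions indexed by weight parity, yielding a candidate $2\times 2$ basis $M'$ with $M=CM'$ for a parity-supported compression matrix $C$). But the proposal is a strategy outline, not a proof: you explicitly defer the step that carries all the difficulty, namely showing that $(M')^{\otimes n}G_{j}$ still satisfies the matchgate identities for every generator and that standard signatures $\underline{R}'_{i}$ with $\underline{R}'_{i}(M')^{\otimes n}=R_{i}$ exist for every recognizer. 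Writing $M=CM'$ only gives $\underline{G}_{j}=C^{\otimes n}\bigl((M')^{\otimes n}G_{j}\bigr)$; nothing in the parity argument shows that realizability of the left-hand side descends through $C^{\otimes n}$, and parity is a necessary but far from sufficient condition for being a matchgate signature. This is precisely the content of the ``dimensionality resolved'' machinery of \cite{string6}, and without it the theorem is not established.

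There are also smaller unproved assertions inside the part you do sketch. The claim that non-degeneracy of $G_{1}$ forces \emph{both} parity classes of rows to be confined to a line is asserted via ``were one class to escape it, $G_{1}$ would factor as a tensor product,'' with a vague appeal to Lemma \ref{degenerate}; that lemma is stated for recognizers on domain size 2 in terms of signature matrices, and the deduction you need (that the coefficient vectors $(S_{0},S_{1})$ ranging over all admissible fixings of $b_{2},\dots,b_{n}$ and all choices of the distinguished position either span ${\bf C}^{2}$ in the relevant sense or force degeneracy) is exactly where care is required when several fixings produce non-proportional constraint vectors, potentially forcing an entire parity class of rows to vanish. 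In short: the approach is the right one, but as written the proposal proves neither the structural claim about $M$ in full nor, more importantly, the transfer of simultaneous realizability to $M'$; it should be regarded as a plan for a proof rather than a proof.
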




\begin{theorem}(The Collapse Theorem on Domain Size $k \ge 3$ by Basis of Rank 2)
Let $R_{1}, R_{2}, \cdots, R_{r}$ and $G_{1}, G_{2}, \cdots, G_{g}$ be recognizers and generators respectively that a holographic algorithm on domain size $k$ employs.  Assume that there exists at least one recognizer is non-degenerate. Otherwise, the holographic algorithm is trivial. If there exists a $2^{\ell}\times k$ basis $M$ of rank 2 such that $R_{1}, R_{2}, \cdots, R_{r}$ and $G_{1}, G_{2}, \cdots, G_{g}$ are simultaneously realizable, then we can get $\check{R}_{1}, \check{R}_{2}, \cdots, \check{R}_{r}$ and $\check{G}_{1}, \check{G}_{2}, \cdots, \check{G}_{g}$ (Following the notations of Theorem \ref{theorem3.3}). Then either all of $\check{G}_{1}, \check{G}_{2}, \cdots, \check{G}_{g}$ are degenerate, in this case the holographic algorithm is trivial from Corollary \ref{coro4.1}, or
there exists a $2\times k$ basis $M'$ of rank 2 such that $R_{1}, R_{2}, \cdots, R_{r}$ and $G_{1}, G_{2}, \cdots, G_{g}$ are simultaneously realizable.
\end{theorem}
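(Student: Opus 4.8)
The plan is to combine the reduction established in Theorem \ref{theorem3.3} with the domain-size-$2$ collapse theorem (Theorem \ref{collapse for domain 2}), treating Theorem \ref{theorem3.3} as a bridge that transports simultaneous realizability back and forth between domain size $k$ on a rank-$2$ basis and domain size $2$. Since at least one recognizer is non-degenerate, I would first relabel so that $R_{1}$ is non-degenerate; the hypotheses of Theorem \ref{theorem3.3} are then met, because the $R_{i}$ and $G_{j}$ are simultaneously realizable on the rank-$2$ basis $M$. Applying the first bullet of Theorem \ref{theorem3.3} produces the domain-size-$2$ signatures $\check{R}_{1},\ldots,\check{R}_{r}$ and $\check{G}_{1},\ldots,\check{G}_{g}$, simultaneously realizable on the $2^{\ell}\times 2$ sub-basis $M_{(2)}=(\alpha_{\sigma},\alpha_{\tau})$, which has rank $2$.

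Next I would split on whether the generators survive the restriction. If every $\check{G}_{j}$ is degenerate, then by Corollary \ref{coro4.1} the contraction of the $\check{G}_{j}$ against the $\check{R}_{i}$ equals the contraction of the original $G_{j}$ against the $R_{i}$, so the whole Holant is computed from degenerate generators and the algorithm is trivial; this is the first alternative in the statement. Otherwise at least one $\check{G}_{j}$ is non-degenerate, and the hypotheses of Theorem \ref{collapse for domain 2} are satisfied: the $\check{R}_{i},\check{G}_{j}$ are domain-size-$2$ signatures simultaneously realizable on the rank-$2$ basis $M_{(2)}$, and they include a non-degenerate generator. The collapse theorem then yields a $2\times 2$ basis $M'_{(2)}$ of rank $2$ on which all the $\check{R}_{i},\check{G}_{j}$ are simultaneously realizable.

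Finally I would push this $2\times 2$ realization back up to domain size $k$ using the second bullet of Theorem \ref{theorem3.3}: feeding $M'_{(2)}$ in place of the generic $2^{\ell}\times 2$ basis, and using the matrix $X_{\sigma,\tau}$ extracted from $R_{1}$, one obtains that the original $R_{i},G_{j}$ are simultaneously realizable on $M'=M'_{(2)}X_{\sigma,\tau}$. Because $M'_{(2)}$ is $2\times 2$ and invertible (rank $2$) while $X_{\sigma,\tau}$ is $2\times k$ of rank $2$ by Lemma \ref{lemma 4.2}, the product $M'$ is a $2\times k$ matrix of rank $2$, which is exactly the second alternative in the statement.

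I do not expect a genuine technical obstacle, since all the analytic work is already carried by the earlier lemmas and by the domain-$2$ collapse theorem; the proof is essentially an orchestration of these ingredients. The only points demanding care are verifying that the rank-$2$ condition is preserved under the multiplication $M'_{(2)}X_{\sigma,\tau}$ (using invertibility of $M'_{(2)}$ together with $\mathrm{rank}(X_{\sigma,\tau})=2$), and checking that the non-degeneracy hypothesis needed to invoke Theorem \ref{collapse for domain 2} is precisely the complement of the trivial ``all $\check{G}_{j}$ degenerate'' case, so that the two alternatives in the conclusion are exhaustive.
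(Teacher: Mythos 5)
Your proposal is correct and follows essentially the same route as the paper: invoke Theorem \ref{theorem3.3} to pass to the domain-size-$2$ signatures $\check{R}_{i},\check{G}_{j}$ on a $2^{\ell}\times 2$ basis, split on whether all $\check{G}_{j}$ are degenerate (trivial case via Corollary \ref{coro4.1}), otherwise apply Theorem \ref{collapse for domain 2} to obtain a $2\times 2$ basis $M'_{(2)}$, and transport back via $X_{\sigma,\tau}$ to the $2\times k$ basis $M'_{(2)}X_{\sigma,\tau}$. The only cosmetic difference is that the paper cites Corollary \ref{coro4.2} for the final lifting step where you use the second bullet of Theorem \ref{theorem3.3} directly, and your explicit check that the rank-$2$ property survives the product is a welcome addition the paper leaves implicit.
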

\begin{proof}
 Assume that $R_{1}$ is non-degenerate, then we have $\check{R}_{1},\check{R}_{2},\cdots,\check{R}_{r}$ and $\check{G}_{1},\check{G}_{2},\cdots,\check{G}_{g}$, which are signatures on domain size 2 and can be simultaneously realized on a $2^{\ell}\times 2$ basis from Theorem \ref{theorem3.3}. 
Since $\check{G}_{1}, \check{G}_{2}, \cdots, \check{G}_{g}$ are not all degenerate,
from Theorem \ref{collapse for domain 2},
$\check{R}_{1}, \check{R}_{2}, \cdots, \check{R}_{r}$ and $\check{G}_{1}, \check{G}_{2}, \cdots, \check{G}_{g}$ can be realized on a $2\times 2$ basis $M'_{(2)}$. Then from
 Corollary \ref{coro4.2}
$R_{1}, R_{2}, \cdots, R_{r}$ and $G_{1}, G_{2}, \cdots, G_{g}$ can be simultaneously realized on the $2\times k$ basis $M'_{(2)}X_{\sigma, \tau}$.
\end{proof}



\bibliographystyle{model1-num-names}







\end{document}